\documentclass[preprint,12pt]{elsarticle}

\usepackage{amssymb}
\usepackage{amsmath}

\usepackage{longtable,blkarray,graphicx,here,ulem,color,mathtools,comment}
\usepackage{mathtools,float}

\newdefinition{defn}{Definition}
\newtheorem{thm}{Theorem}
\newtheorem{lem}{Lemma}
\newtheorem{prop}{Proposition}
\newdefinition{rmk}{Remark}
\newproof{pr}{Proof}

\allowdisplaybreaks

\journal{Discrete Mathematics}

\begin{document}

\begin{frontmatter}



\title{A group structure arising from Grover walks on complete graphs with self-loops and its application}


\author[label1]{Tatsuya Tsurii\corref{cor1}}
\ead{t3tsuri23@rsch.tuis.ac.jp}

\author[label2]{Naoharu Ito}
\ead{naoharu@cc.nara-edu.ac.jp}

\cortext[cor1]{Corresponding author}

\affiliation[label1]{organization={Department of Informatics, Faculty of Informatics, Tokyo University of Information Sciences},
            addressline={Wakabaku Onaridai 4-1}, 
            city={Chiba},
            postcode={265-8501}, 
            state={Chiba},
            country={Japan}}

\affiliation[label2]{organization={Department of Mathematical Education, Nara University of Education},
            addressline={Takabatake-cho}, 
            city={Nara},
            postcode={630-8528}, 
            state={Nara},
            country={Japan}}

\begin{abstract}
This paper introduces a group-theoretic framework to analyze the algebraic structure of the Grover walk on a complete graph with self-loops. 
We construct a group generated by the Grover matrix and a diagonal matrix whose entries are powers of a complex root of unity. 
We then characterize the resulting quotient group, which is defined using a subgroup formed by commutators involving these matrices. 
We show that this quotient group is isomorphic to a finite cyclic group whose structure depends on the parity of the number of vertices.
This group-theoretic characterization reveals underlying symmetries in the time evolution of the Grover walk and provides an algebraic framework for understanding its periodic behavior.
\end{abstract}



\begin{keyword}
Quantum walk \sep Grover walk \sep Group structure \sep Complete graph \sep Periodicity 

\MSC[2020] 05C25 \sep 15A30 \sep 81Q99 \sep 39A23
\end{keyword}

\end{frontmatter}




\section{\label{sec:intr}Introduction}
Quantum walks, the quantum analogues of classical random walks, have emerged as  powerful tools in quantum computation, mathematical physics, and graph theory (see \cite{AAKV,AKST,BGKLW,HKSS,K,MW,SJK} and references therein). Aharonov et al.~\cite{AAKV} introduced quantum walks as a computational framework, and Kempe~\cite{K} provided a comprehensive survey of their algorithmic and structural properties.

Among various types of quantum walks, the Grover walk is particularly notable for its role in quantum search algorithms and its rich spectral behavior. 
The periodicity of the Grover walk has been investigated on several graph classes, including distance-regular graphs~\cite{Y}, bipartite regular graphs~\cite{Ku}, and complete graphs with self-loops~\cite{IMT}, primarily through spectral and combinatorial techniques.

This paper introduces a different perspective by employing a group-theoretic framework to analyze the algebraic structure underlying the Grover walk on a complete graph with self-loops and $n$ vertices. 
Our approach constructs a group $\mathcal{K}$ generated by two
matrices: the Grover matrix $G$, which governs the time evolution of the walk, and a diagonal matrix $S=\text{diag}[1, \omega, \omega^2, \dots, \omega^{n-1}]$, where $\omega=e^{\frac{2\pi i}{n}}$.
We then analyze the structure of this group via a subgroup $\mathcal{H} \subset \mathcal{K}$ formed by commutators involving $G$ and powers of $S$, which gives rise to the quotient group $\mathcal{K}/\mathcal{H}$.

Our main contribution is the precise characterization of $\mathcal{K}/\mathcal{H}$. 
We prove that $\mathcal{K}/\mathcal{H}$ is a finite cyclic group whose structure depends on the parity of $n$. 
This result provides a purely algebraic explanation for the periodicity of the Grover walk. 
Specifically, we use the structure of $\mathcal{K}/\mathcal{H}$ to determine the smallest positive integer $m$ such that $(S^j G)^m$ equals the identity element of $\mathcal{H}$ for all $j = 0, 1, \dots, n-1$. 
We show that this integer $m$ is $2n$, which corresponds to the period of the Grover walk.

The paper is organized as follows. In Section~\ref{sec:group} we introduce the Grover walk on complete graphs with self-loops and define groups arising from
the Grover walk. Section~\ref{sec:main} presents our main results, including the characterization of $\mathcal{K}/\mathcal{H}$ and the derivation of the periodicity of the Grover walk.

\section{A group arising from the Grover walk} \label{sec:group}

We consider the Grover walk on a complete graph with a self-loop at each vertex $v_j$ ($j = 1, 2, \dots, n$), where $n$ is a positive integer greater than 1. 
Figure~\ref{fig:lena} shows the complete graph with four vertices. 
\begin{figure}[H]
\centering
\includegraphics[scale=1.0, bb=0 0 99 97]{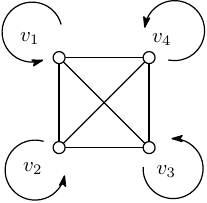} 
\caption{\label{fig:lena}The complete graph with four vertices}
\end{figure}
\noindent 
Note that the edges of each self-loop are directed, but the choice of direction is arbitrary. 
Each vertex $v_j$ has an $n$-dimensional complex vector $\psi_{t}(v_j)$ 
$(t = 0, 1, 2, \dots)$, called the probability amplitude vector. 

Let $P_0, P_1, \dots, P_{n-1}$ be $n \times n$ matrices given by 
\begin{align*}
P_0 &= 
\begin{bmatrix}
\frac{2}{n}-1 & \frac{2}{n} & \frac{2}{n} & \cdots & \frac{2}{n} \\
0 & 0 & 0 & \cdots & 0 \\
0 & 0 & 0 & \cdots & 0 \\
\vdots & \vdots & \vdots & \ddots & \vdots \\
0 & 0 & 0 & \cdots & 0
\end{bmatrix}, \ 
P_1 = 
\begin{bmatrix}
0 & 0 & 0 & \cdots & 0 \\
\frac{2}{n} & \frac{2}{n}-1 & \frac{2}{n} & \cdots & \frac{2}{n} \\
0 & 0 & 0 & \cdots & 0 \\
\vdots & \vdots & \vdots & \ddots & \vdots \\
0 & 0 & 0 & \cdots & 0
\end{bmatrix}, \dots, \\
P_{n-1} &= 
\begin{bmatrix}
0 & 0 & \cdots & 0 & 0\\
0 & 0 & \cdots & 0 & 0 \\
\vdots & \vdots & \ddots & \vdots & \vdots \\
0 & 0 & \cdots & 0 & 0 \\
\frac{2}{n} & \frac{2}{n} & \cdots & \frac{2}{n} & \frac{2}{n}-1 \\
\end{bmatrix}. 
\end{align*}
The matrix $G = P_0 + P_1 + \cdots + P_{n-1}$ is called the Grover matrix of degree $n$. 
Note that $G$ is involutory; that is, $G^2 = I_n$, where $I_n$ denotes the identity matrix of size $n$. 

The Grover walk on a complete graph is a type of quantum walk whose time evolution is given by 
\begin{equation} \label{eq:time_evol}   
\varPsi_{t+1} = U  \varPsi_{t} \quad \text{for } \ t = 0, 1, 2, \dots 
\end{equation}
where $U$ is an $n^2 \times n^2$ matrix defined by 
\[ 
U =  
\begin{bmatrix}
P_0 & P_1 & P_2 & \cdots & P_{n-1}\\
P_{n-1} & P_0 & P_1 & \cdots & P_{n-2}\\
P_{n-2} & P_{n-1} & P_0 & \cdots & P_{n-3}\\
\vdots & \vdots & \vdots & \ddots & \vdots \\
P_1    & P_2    &  P_3  & \cdots & P_0
\end{bmatrix}, 
\]
and $\varPsi_t = [ \psi_{t}(v_1)  \ \psi_{t}(v_2) \ \cdots \ \psi_{t}(v_n) ]^T$ 
(see also \cite{IMT}). 

If all solutions of \eqref{eq:time_evol} are periodic and $p$ is 
the least common multiple of their periods, then the Grover walk is said to be  periodic with period $p$. 

Let $A$ be the $n \times n$ discrete Fourier matrix defined by
\begin{align*}
A = \frac{1}{\sqrt{n}} \begin{bmatrix}
1      & 1            & 1               & \cdots & 1 \\
1      & \omega       & \omega^2        & \cdots & \omega^{n-1}\\
1      & \omega^2     & \omega^4        & \cdots & \omega^{2(n-1)}\\
\vdots & \vdots       & \vdots          & \ddots & \vdots \\
1      & \omega^{n-1} & \omega^{2(n-1)} & \cdots & \omega^{(n-1)^2}
\end{bmatrix},
\end{align*}
where $\omega=e^{\frac{2\pi i}{n}}$. 
In our previous paper \cite{IMT}, we showed that the evolution matrix $U$ is diagonalizable by the matrix $A \otimes I_n$ as follows:  
\begin{align} 
(A \otimes I_n)^{-1} U (A \otimes I_n) = \text{diag} [G, SG, \dots, S^{n-1}G], 
\label{eq:diagU}
\end{align}
where 
\begin{align*}
S = \begin{bmatrix}
1 &      0 & 0 & \cdots & 0 \\
0 & \omega & 0 & \cdots & 0 \\
0 & 0      & \omega^2 & \cdots & 0 \\
\vdots & \vdots & \vdots & \ddots & \vdots \\
0 & 0 & 0 & \cdots & \omega^{n-1}
\end{bmatrix}, \ 
G = 
\begin{bmatrix}
\frac{2}{n}-1 & \frac{2}{n} & \frac{2}{n} & \cdots & \frac{2}{n} \\[2pt]
\frac{2}{n} & \frac{2}{n}-1 & \frac{2}{n} & \cdots & \frac{2}{n} \\[2pt]
\frac{2}{n} & \frac{2}{n} & \frac{2}{n}-1 & \cdots & \frac{2}{n} \\[2pt]
\vdots & \vdots & \vdots & \ddots & \vdots \\[2pt]
\frac{2}{n} & \frac{2}{n} & \frac{2}{n} & \cdots & \frac{2}{n}-1 
\end{bmatrix}. 
\end{align*}
Note that $S^n = I_n$ 
and 
\begin{align}
 &GS^j \ne S^j G \ \text{for} \ j = 1,2, \dots, n-1. \label{eq:GS_inv}
\end{align}

We define the group $\mathcal{K}$ generated by the matrices $S$ and $G$, and 
the subgroup $\mathcal{H} \subset \mathcal{K}$ generated by 
the commutators 
\begin{align*}
[S^j,G] \coloneqq S^j G S^{-j} G^{-1} = S^j G S^{-j} G 
\quad \text{for } j=1,2,\dots,n-1.
\end{align*} 
It is shown that $\mathcal{H}$ is a normal subgroup of $\mathcal{K}$ (see 
Proposition~\ref{prop:normal} in Section~\ref{sec:main}). 

The purpose of this paper is to determine the structure of 
the quotient group $\mathcal{K}/\mathcal{H}$ and 
to show the periodicity of the Grover walk by using the structure.

\section{Main results} \label{sec:main}
Let $\mathbb{Z}_n$ be the cyclic group of order $n$.
Our main results are stated in the following two theorems. 

\begin{thm} \label{thm:structure}
The following assertions hold for $\mathcal{K}/\mathcal{H}$. 
\begin{enumerate}
\item $\mathcal{K}/\mathcal{H} \cong \mathbb{Z}_n \times \mathbb{Z}_2$ 
if $n$ is even. 
\item $\mathcal{K}/\mathcal{H} \cong \mathbb{Z}_n$ if $n$ is odd. 
\end{enumerate} 
\end{thm}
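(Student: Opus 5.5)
The plan is to show first that $\mathcal{K}/\mathcal{H}$ is an abelian quotient of $\mathbb{Z}_n\times\mathbb{Z}_2$, and then to find enough relations and invariants to pin it down exactly. By Proposition~\ref{prop:normal}, $\mathcal{H}$ is normal. Modulo $\mathcal{H}$ we have $S^jG\equiv GS^j$ for every $j$. Hence the images $\bar S,\bar G$ commute, and $\mathcal{K}/\mathcal{H}$ is abelian and generated by them. Since $S^n=I_n$ and $G^2=I_n$, the map $\mathbb{Z}_n\times\mathbb{Z}_2\to\mathcal{K}/\mathcal{H}$, $(a,b)\mapsto \bar S^a\bar G^b$, is a surjective homomorphism. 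It remains to compute its kernel.

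\emph{A relation.} Write $G=2P-I_n$, where $P=\frac1n J$ is the orthogonal projection onto $e=(1,\dots,1)^T$. Then $S^jGS^{-j}=2P_j-I_n$, where $P_j$ is the projection onto $S^je/\sqrt n$. These vectors are the columns of the Fourier matrix $A$, so they form an orthonormal basis. The $P_j$ are therefore commuting orthogonal projections with $\sum_j P_j=I_n$, which gives
\[
\prod_{j=0}^{n-1} S^jGS^{-j}=\prod_{j=0}^{n-1}(2P_j-I_n)=(-1)^{n}\bigl(I_n-2\textstyle\sum_j P_j\bigr)=(-1)^{n+1}I_n .
\]
On the other hand, $S^jGS^{-j}=[S^j,G]\,G$, so the left-hand side is $\equiv G^n \pmod{\mathcal{H}}$. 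For $n$ odd this yields $G\in\mathcal{H}$, so $\bar G$ is trivial and $\mathcal{K}/\mathcal{H}$ is cyclic, generated by $\bar S$ of order dividing $n$. For $n$ even the relation reduces to $G^n=I_n$ and gives nothing new.

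\emph{Invariants (lower bounds).} All entries of matrices in $\mathcal{K}$ lie in $R=\mathbb{Z}[\omega,1/n]$. For $n$ odd, choose a prime ideal $\mathfrak p$ of $R$ above $2$; this is possible because $n$ is odd. In $R/\mathfrak p$ we have $2/n\equiv 0$ and $-1\equiv 1$, so $G\mapsto I_n$. The $n$-th roots of unity stay distinct, since $n$ is odd, so $S\mapsto \operatorname{diag}[1,\bar\omega,\dots,\bar\omega^{n-1}]$ has order exactly $n$. Every commutator maps to $I_n$, so this reduction induces a homomorphism $\mathcal{K}/\mathcal{H}\to GL_n(R/\mathfrak p)$ whose image is cyclic of order $n$. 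Hence $\mathcal{K}/\mathcal{H}\cong\mathbb{Z}_n$ for $n$ odd.

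For $n$ even, I would use the determinant: $\det[S^j,G]=1$, while $\det G=(-1)^{n-1}=-1$ and $\det S=\omega^{n(n-1)/2}=-1$. Thus $\det$ factors through $\mathcal{K}/\mathcal{H}$ and detects $\bar S^a\bar G^b$ with $a+b$ odd.

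\emph{Main obstacle.} The hard step is to show that no element $\bar S^a\bar G^b$ with $(a,b)\neq(0,0)$ and $a+b$ even is trivial when $n$ is even. Equivalently, we need a homomorphism on $\mathcal{K}$ that kills $\mathcal{H}$ and detects the full order $n$ of $S$. The mod-$2$ reduction fails here, because $1/n\notin R_{(2)}$ and $\omega$ loses its $2$-power part. My first attempt would be to combine two invariants. The first is the reduction modulo a prime above $2$ applied after conjugating into a basis adapted to $e$ and $e^\perp$; this should capture the odd part of the order of $\bar S$. The second is the determinants of the action on suitable $S$-graded pieces (characters $\chi_k(X)=\det$ of $X$ restricted to a $G$- and commutator-stable quotient), which should capture the $2$-part. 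If such a clean invariant is not available, the fallback is to compute $\mathcal{H}$ more concretely: show that every product of commutators $[S^j,G]$ sends $e$ to a vector whose coordinates satisfy an explicit constraint that $S^a$ and $S^aG$ violate for $0<a<n$.
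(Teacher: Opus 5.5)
Your reduction is sound: $\mathcal{K}/\mathcal{H}$ is a quotient of $\mathbb{Z}_n\times\mathbb{Z}_2$ via $(a,b)\mapsto \bar S^a\bar G^b$. Your odd case is also complete, by a different route from the paper's. The projection identity gives $G\equiv G^n\equiv I_n$ modulo $\mathcal{H}$, where the paper writes $G=[S,G][S^2,G]\cdots[S^{n-1},G]$ directly from Lemma~\ref{lem:comm}. Reduction modulo a prime of $\mathbb{Z}[\omega,1/n]$ above $2$ is a genuine homomorphism that kills $\mathcal{H}$ and keeps $S$ of order $n$.

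The gap is the even case. The determinant only confines the kernel of your surjection to $\{(a,b): a+b \text{ even}\}$, which is cyclic of order $n$ generated by $(1,1)$. What remains is exactly the claim that $(SG)^k=S^kG^k\notin\mathcal{H}$ for $0<k<n$. Without it you cannot rule out that $\mathcal{K}/\mathcal{H}$ is a proper quotient of $\mathbb{Z}_n\times\mathbb{Z}_2$. Your ``main obstacle'' paragraph lists invariants that ``should'' capture the odd part and the $2$-part of the order of $\bar S$, but none is constructed or checked. As written, assertion (i) of the theorem is not proved.

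The missing idea is elementary and already sits in your own notation. Put $R_j=2P_j-I_n$. Since $P_jP_0=0$ for $j\not\equiv 0 \pmod n$, you get $[S^j,G]=R_jR_0=I_n-2(P_0+P_j)$. So each generator of $\mathcal{H}$ commutes with $G=R_0$ and sends $e$ to $-e$.

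Hence $\mathcal{H}$ lies in the centralizer of $G$, and $he=\pm e$ for every $h\in\mathcal{H}$. This is the ``explicit constraint'' your fallback was looking for. It is also what the paper uses in Lemma~\ref{lem:H}(i),(ii), via $G[S^k,G]G=[S^k,G]$.

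Now suppose $S^aG^b\in\mathcal{H}$. Then $S^a$ commutes with $G$, so $P_a=P_0$. Equivalently, $S^aG^be=S^ae=\pm e$, which forces $a\equiv 0\pmod n$. The one remaining case, $G$ itself, is excluded by your determinant: $\det G=-1$, while $\det h=1$ for $h\in\mathcal{H}$.

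This gives injectivity for even $n$ with no number theory. The same centralizer argument also makes the mod-$2$ reduction in your odd case unnecessary.
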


\begin{thm} \label{thm:period}
$2n$ is the smallest positive integer $m$ such that $G^m$, $(SG)^m$, $(S^2G)^m, \dots, (S^{n-1}G)^m$ are all equal to the identity element in $\mathcal{H}$.  
\end{thm}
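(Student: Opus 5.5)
Although the statement says ``the identity element in $\mathcal{H}$,'' I will read it as the matrix identity $(S^jG)^m=I_n$ for all $j$, matching the period of the Grover walk via \eqref{eq:diagU}. Under the other reading, membership in $\mathcal{H}$, Theorem~\ref{thm:structure} would not give $2n$. In $\mathcal{K}/\mathcal{H}$ the images of $S$ and $G$ commute, so $(S^jG)^m\equiv S^{jm}G^m$. For odd $n$ the image of $G$ is then trivial and the minimal such $m$ would be $n$. So I will prove the claim by a direct spectral computation, using the group structure only as a consistency check.

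Each $T_j=S^jG$ is unitary, since both factors are, and hence diagonalizable. Therefore $T_j^m=I_n$ holds exactly when every eigenvalue of $T_j$ is an $m$-th root of unity. To get the eigenvalues, write $G=\frac{2}{n}\mathbf{1}\mathbf{1}^T-I_n$, so that $T_j=-S^j+\frac{2}{n}S^j\mathbf{1}\mathbf{1}^T$ is a rank-one perturbation of $-S^j$. By the matrix determinant lemma,
\[
\det(\lambda I_n-T_j)=\prod_{k=0}^{n-1}(\lambda+\omega^{jk})\Bigl(1-\frac{2}{n}\sum_{k=0}^{n-1}\frac{\omega^{jk}}{\lambda+\omega^{jk}}\Bigr).
\]
Put $d=\gcd(j,n)$ and $r=n/d$. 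Then the values $\omega^{jk}$ run over the $r$-th roots of unity $\zeta$, each $d$ times. I will evaluate $\sum_{\zeta^r=1}\zeta/(\lambda+\zeta)=r-\lambda\sum_\zeta 1/(\lambda+\zeta)$ using the logarithmic derivative of $z^r-1$ at $z=-\lambda$.

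After clearing denominators, this should show that the characteristic polynomial of $T_j$ factors as $(\lambda^r-(-1)^r)^{d-1}$ times an explicit polynomial built from $\lambda^r$, $(-1)^r$ and a factor of degree $1$ in $\lambda$. The first factor comes from the eigenvalues $-\zeta$ of $-S^j$ that survive with multiplicity $d-1$. The second factor should reduce to something like $\lambda^r=\pm(-1)^r$, and possibly an extra eigenvalue $\pm1$ of lower degree. This step is the main obstacle: pinning down the exact form, including the $j=0$ case where $T_0=G$ has eigenvalues $1$ (once) and $-1$ ($n-1$ times), requires careful bookkeeping of signs and degrees. I expect every eigenvalue to satisfy $\lambda^{2r}=1$, and hence $\lambda^{2n}=1$. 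That gives $T_j^{2n}=I_n$ for all $j$.

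For minimality, I will use $j=1$, where $d=1$ and $r=n$. When $n$ is even I expect the factor $(-\lambda)^n=-1$ to produce a primitive $2n$-th root of unity among the eigenvalues of $SG$, so no proper divisor of $2n$ works. When $n$ is odd the eigenvalues of $SG$ may only give order $n$. In that case I combine them with the eigenvalue $-1$ of $G$ (since $n>1$), and the least common multiple of the orders is $2n$. In both cases the smallest common $m$ equals the least common multiple over $j$ of the orders of the eigenvalues, which is $2n$.
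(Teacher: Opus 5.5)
Your proposal is correct, and it takes a genuinely different route from the paper. No reinterpretation is needed: the identity element of $\mathcal{H}$ is $I_n$, so the statement literally asks for $(S^jG)^m=I_n$. Your remark that mere membership in $\mathcal{H}$ would give $n$ agrees with the paper's proof, which first shows $(S^jG)^n\in\mathcal{H}$ with $n$ minimal for $j=1$.

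The paper's argument runs entirely through the group structure:
\begin{itemize}
\item Commutativity of $\mathcal{K}/\mathcal{H}$ gives $(S^jG)^n\mathcal{H}=S^{jn}G^n\mathcal{H}=\mathcal{H}$.
\item Since $\mathcal{H}$ is an elementary abelian $2$-group, this yields $(S^jG)^{2n}=I_n$.
\item Minimality comes from $(SG)^k\notin\mathcal{H}$ for $k<n$. For even $n$ it adds the identity $(SG)^n=[S,G][S^2,G]\cdots[S^{n-1},G]\neq I_n$; for odd $n$ it uses $G^n=G\neq I_n$.
\end{itemize}

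You bypass $\mathcal{H}$ and compute the spectrum of each block. The step you flag as the main obstacle closes cleanly. The logarithmic derivative gives $\sum_{\zeta^r=1}\zeta/(\lambda+\zeta)=r/(1-(-\lambda)^r)$, so the bracket in your formula equals $\bigl((-\lambda)^r+1\bigr)/\bigl((-\lambda)^r-1\bigr)$. Combined with $\prod_k(\lambda+\omega^{jk})=(-1)^n\bigl((-\lambda)^r-1\bigr)^d$, this gives
\[
\det(\lambda I_n-S^jG)=(-1)^n\bigl((-\lambda)^r-1\bigr)^{d-1}\bigl((-\lambda)^r+1\bigr).
\]
There is no extra linear factor, since the degree is already $r(d-1)+r=n$. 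The case $j=0$ is just $r=1$, $d=n$, which gives $(\lambda-1)(\lambda+1)^{n-1}$. So every eigenvalue satisfies $\lambda^{2r}=1$, as you expected.

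For $j=1$ the polynomial is $\lambda^n+(-1)^n$, hence $(SG)^n=(-1)^{n+1}I_n$. This is exactly what your minimality argument needs:
\begin{itemize}
\item For even $n$, $SG$ has a primitive $2n$-th root of unity as an eigenvalue, and $(SG)^n=-I_n$, matching the paper's product of commutators.
\item For odd $n$, $SG$ has order $n$ and $G$ has order $2$, so the least common multiple is $2n$.
\end{itemize}

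What your route buys is a self-contained computation giving the full spectrum, and hence the exact order, of every $S^jG$. What it gives up is the point of the paper: reading the period off the structure of $\mathcal{K}/\mathcal{H}$ and the exponent of $\mathcal{H}$.
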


Recalling \eqref{eq:diagU}, Theorem~\ref{thm:period} implies that $2n$ is the smallest positive integer $m$ such that $U^m = I_{n^2}$.
Therefore, we obtain that the Grover walk on the complete graph 
with $n$ vertices is periodic with period $2n$.
This result yields an alternative proof of Theorem 2 in \cite{IMT}.

We present lemmas and propositions necessary for the proofs of 
our main results.  

\begin{lem} \label{lem:comm}
Let $j \in \{1,2,\dots, n-1 \}$. Then the matrix $A^{-1}[S^j, G]A$ is diagonal, with its $(1,1)$-th and $(j+1,j+1)$-th elements equal to $-1$, while all other elements are equal to $1$. 
\end{lem}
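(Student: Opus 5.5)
The plan is to conjugate everything by the Fourier matrix $A$ and use two facts: $G$ is a rank-one perturbation of $-I_n$, and $S$ acts as a cyclic shift in the Fourier basis. Write $e=[1,1,\dots,1]^T$, so that $G = \frac{2}{n}ee^T - I_n$. The first column of $A$ is $\frac{1}{\sqrt{n}}e$, and $A$ is unitary, so $A^{-1}=A^*$. Hence
\[
A^{-1}GA = 2E_{11} - I_n = \text{diag}[1,-1,\dots,-1],
\]
where $E_{11}$ is the matrix unit at position $(1,1)$. Call this $D$.

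Next we compute $A^{-1}S^jA$. The $k$-th column of $A$ (indexing from $0$) is $a_k = \frac{1}{\sqrt n}[\omega^{0\cdot k},\omega^{k},\dots,\omega^{(n-1)k}]^T$. Multiplying by $S$ multiplies the $\ell$-th entry by $\omega^\ell$, so $Sa_k = a_{k+1}$, with indices taken mod $n$. Therefore $A^{-1}S^jA = C^j$, where $C$ is the cyclic permutation matrix sending $e_k \mapsto e_{k+1}$.

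The commutator then becomes
\[
A^{-1}[S^j,G]A = C^j D C^{-j} D.
\]
Conjugating a diagonal matrix by a permutation matrix permutes its diagonal entries. Thus $C^jDC^{-j}$ is diagonal, with $-1$ everywhere except a single $1$ at index $j$ (0-based), i.e. at position $(j+1,j+1)$. Multiplying entrywise by $D$ gives the following for $j\neq 0$:
\begin{itemize}
\item position $1$: $(-1)(1)=-1$;
\item position $j+1$: $(1)(-1)=-1$;
\item every other position: $(-1)(-1)=1$.
\end{itemize}
This is exactly the claim.

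The main point needing care is the index convention. We must check that $C^jDC^{-j}$ moves the special entry from index $0$ to index $j$, not to $-j$. Since $C^jE_{00}C^{-j}=E_{jj}$ under $Ce_k=e_{k+1}$, this holds; the bookkeeping just has to be verified against the definition of $A$.
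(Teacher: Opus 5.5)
Your proof is correct and follows essentially the same route as the paper. Like the paper, you compute $A^{-1}GA=\text{diag}[1,-1,\dots,-1]$, recognize $A^{-1}S^{\pm j}A$ as cyclic permutation matrices, and multiply out $A^{-1}S^jA\cdot A^{-1}GA\cdot A^{-1}S^{-j}A\cdot A^{-1}GA$. Your explicit justifications ($G=\frac{2}{n}ee^T-I_n$, $Sa_k=a_{k+1}$, and $C^jE_{00}C^{-j}=E_{jj}$) make the paper's ``direct computation'' transparent, and your index bookkeeping checks out; the only blemish is the cosmetic switch between $E_{11}$ and $E_{00}$ for the same matrix unit.
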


\begin{pr}
By direct computation, 
\begin{align*}
&\text{\hspace{10mm} $(n-j+1)$-th column \hspace{27mm}$(j+1)$-th column} \\
&\text{\hspace{27.7mm}$\downarrow$} \text{\hspace{66mm}$\downarrow$}\\[-3pt]
A^{-1}S^j A   &= 
\begin{bmatrix}
 &        &   & 1 &        & \\
 &  \text{\huge{0}}       &   &   & \ddots & \\
 &        &   &   &        &1 \\
1&        &   &   &        & \\
 & \ddots &   &   &  \text{\huge{0}}       & \\
 &        & 1 &   &        & 
\end{bmatrix},\ 
A^{-1}S^{-j} A   = 
\begin{bmatrix}
 &        &   & 1 &        & \\
 &  \text{\huge{0}}       &   &   & \ddots & \\
 &        &   &   &        &1 \\
1&        &   &   &        & \\
 & \ddots &   &   &  \text{\huge{0}}       & \\
 &        & 1 &   &        & 
\end{bmatrix},\\[3mm]
A^{-1}GA &=\begin{bmatrix}
1          &   0         &   \cdots    &0 \\
 0         &  -1          &   \cdots  & 0 \\
 \vdots &    \vdots & \ddots &  \vdots \\
  0       &  0 &   \cdots    &-1  \\
\end{bmatrix}.
\end{align*}
Hence, 
\begin{align*}
A^{-1}[S^j,G]A &=  A^{-1}S^j A \cdot A^{-1}GA \cdot A^{-1} S^{-j}A 
\cdot A^{-1}G A \\[5pt]
&\text{\hspace{32mm}$(j+1)$-th column} \\
&\text{\hspace{41mm}$\downarrow$}\\[-5pt]
&=\begin{bmatrix}
-1&   &       & & & & & \\
  &  1 &       & & & \text{\huge{0}}&  &\\
  &   & \ddots & & & & &\\
  &   &       &1 & & & &\\
  &   &       & &-1 & & &\\
  & \text{\huge{0}}  &       & & &1 & &\\
  &   &       & & & &\ddots& \\
  &   &       & & & & &1
\end{bmatrix}.
\end{align*}
\hspace{\fill}$\Box$
\end{pr}

Let $\mathbb{N}$ be the set of natural numbers. 
By Lemma \ref{lem:comm}, it is straightforward to verify that the following lemma holds.

\begin{lem} \label{lem:inv}
The commutator $[S^j,G]$ is involutory; that is, $[S^j,G]^2=I_n$ for $j \in \mathbb{N} \cup \{0\}$ and 
\begin{align*} 
 &[S^j,G^k]=[G^k, S^j] \ \text{for } j,k \in \mathbb{N} \cup \{0\}. 
\end{align*}
\end{lem}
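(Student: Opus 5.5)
The plan is to work entirely in the Fourier basis given by $A$, where Lemma~\ref{lem:comm} already does most of the work, and to separate the cases $j \equiv 0 \pmod n$ and $k$ even or odd.

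\emph{Involutivity.} For $j \not\equiv 0 \pmod n$ we reduce $j$ modulo $n$, using $S^n = I_n$, to some $j' \in \{1,\dots,n-1\}$. Lemma~\ref{lem:comm} then says $A^{-1}[S^{j'},G]A$ is diagonal with entries $\pm 1$. Hence
\[
A^{-1}[S^j,G]^2A = \bigl(A^{-1}[S^{j'},G]A\bigr)^2 = I_n,
\]
and conjugating back by $A$ gives $[S^j,G]^2 = I_n$. If $j \equiv 0 \pmod n$, then $S^j = I_n$ and $[S^j,G] = G\,G^{-1} = I_n$, which is trivially involutory.

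\emph{The identity $[S^j,G^k] = [G^k,S^j]$.} Since $G^2 = I_n$, the power $G^k$ equals $I_n$ when $k$ is even and $G$ when $k$ is odd.

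When $k$ is even, both commutators equal $I_n$, so there is nothing to prove.

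When $k$ is odd, we must show $[S^j,G] = [G,S^j]$. By definition $[G,S^j] = G S^j G^{-1} S^{-j}$, and one checks directly that this is $[S^j,G]^{-1}$:
\[
[S^j,G]^{-1} = \bigl(S^j G S^{-j} G^{-1}\bigr)^{-1} = G S^j G^{-1} S^{-j}.
\]
By the first part, $[S^j,G]$ is its own inverse, so $[G,S^j] = [S^j,G]^{-1} = [S^j,G]$.

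No step here is a genuine obstacle; the only care needed is the bookkeeping of reducing $j$ modulo $n$ and $k$ modulo $2$.
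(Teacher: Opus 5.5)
Your proof is correct and follows the route the paper indicates: the paper gives no written proof and only says the lemma is straightforward from Lemma~\ref{lem:comm}. You supply the omitted details, namely reducing $j$ modulo $n$, handling $j \equiv 0 \pmod n$ and even $k$ separately, and combining $[G,S^j]=[S^j,G]^{-1}$ with involutivity.
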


The following propositions are derived from Lemmas \ref{lem:comm} and \ref{lem:inv}. 

\begin{prop} 
The following assertion holds for $\mathcal{H}$. 
\begin{align} \label{eq:strH}
\mathcal{H} \cong \underbrace{\mathbb{Z}_2 \times \mathbb{Z}_2 \times \cdots \times 
\mathbb{Z}_2}_{n-1~times}
\end{align}  
\end{prop}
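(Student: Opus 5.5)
We conjugate by the Fourier matrix and then work with diagonal sign matrices. By Lemma~\ref{lem:comm}, the map $X \mapsto A^{-1}XA$ is a group isomorphism from $\mathcal{K}$ onto its image. It sends each generator $[S^j,G]$ ($j=1,\dots,n-1$) to the diagonal matrix
\[
D_j=\text{diag}[-1,1,\dots,1,-1,1,\dots,1],
\]
where the entries $-1$ sit in positions $1$ and $j+1$. Hence $\mathcal{H}\cong A^{-1}\mathcal{H}A$, and the latter is the group generated by $D_1,\dots,D_{n-1}$. All these matrices are diagonal with entries $\pm1$, so they commute pairwise. Each squares to $I_n$, which is also Lemma~\ref{lem:inv}. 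Therefore $A^{-1}\mathcal{H}A$ is an elementary abelian $2$-group, a subgroup of the group $\{\pm1\}^n\cong\mathbb{Z}_2^n$ of diagonal sign matrices.

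To identify this subgroup, we identify $\{\pm1\}^n$ with the $\mathbb{F}_2$-vector space $\mathbb{F}_2^n$, sending $-1$ to $1$ and $1$ to $0$. Then $D_j$ corresponds to $e_1+e_{j+1}$, where $e_i$ are the standard basis vectors. The vectors $e_1+e_2,\dots,e_1+e_n$ are linearly independent over $\mathbb{F}_2$: in a vanishing combination $\sum_{j} c_j(e_1+e_{j+1})$, the coefficient of $e_{j+1}$ is $c_j$, so every $c_j=0$. Thus they span an $(n-1)$-dimensional subspace. Concretely, this is the subspace of vectors with an even number of $1$'s, i.e.\ the sign matrices with determinant $1$. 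Hence $|\mathcal{H}|=2^{n-1}$ and
\[
\mathcal{H}\cong\mathbb{Z}_2^{\,n-1},
\]
the elements $[S^1,G],\dots,[S^{n-1},G]$ forming a minimal generating set of independent involutions.

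There is no real obstacle here. The one point to state carefully is that $\mathcal{H}$ is generated only by the $n-1$ commutators, with no extra elements such as $[S^j,G]$ for $j\ge n$. Even those would contribute nothing new, since $S^n=I_n$ gives $[S^{j+n},G]=[S^j,G]$. The only genuine check is the linear independence over $\mathbb{F}_2$, which is immediate from the distinct positions $j+1$ of the second $-1$ entry.
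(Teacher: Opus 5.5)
Your proof is correct and follows essentially the same route as the paper: Lemma~\ref{lem:comm} shows that, after conjugation by $A$, the commutators become commuting diagonal involutions, and one then counts $2^{n-1}$ elements. Your explicit $\mathbb{F}_2$-independence check of the vectors $e_1+e_{j+1}$ is the step the paper's count actually needs but does not state. The paper cites only pairwise distinctness of the $[S^j,G]$, and that alone would not guarantee that the $2^{n-1}$ products $\prod_j [S^j,G]^{m_j}$ are distinct, so your version is the more complete one.
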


\begin{pr}
It follows from $S^n = I_n$ that 
\[
[S^j,G]=[S^k,G] \text{ for } j\equiv k\bmod n, \ j,k=1,2,\dots.
\]
By Lemma \ref{lem:comm}, we have    
\begin{align*}
[S^j,G] \ne [S^k,G] \text{\ for $j \ne k,\ j,k=1,2,\dots, n-1$,} 
\end{align*}
and 
\begin{align*}
[S^j,G][S^k,G] = [S^k,G][S^j,G] \text{\ for $j,k=1,2,\dots, n-1$.} 
\end{align*} 
Since $[S^j, G]$ is involutory for $j = 1,2, \dots, n-1$, 
we have
\begin{align} \label{eq:strH-2}
\mathcal{H} = \{ [S,G]^{m_1}[S^2,G]^{m_2} \cdots [S^{n-1}, G]^{m_{n-1}} \ 
| \ m_1, m_2, \dots, m_{n-1} =0, 1 \}.
\end{align}
Hence, the order of $\mathcal{H}$ is $2^{n-1}$, and we obtain \eqref{eq:strH}. 
\hspace{\fill}$\Box$
\end{pr}

\begin{prop} \label{prop:normal}
$\mathcal{H}$ is a normal subgroup of $\mathcal{K}$. 
\end{prop}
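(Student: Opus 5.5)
Since $\mathcal{K}$ is generated by $S$ and $G$, it suffices to show that conjugation by each generator maps $\mathcal{H}$ into itself. Conjugation is a group automorphism of $\mathcal{K}$, so it is enough to check that $S[S^j,G]S^{-1}\in\mathcal{H}$ and $G[S^j,G]G^{-1}\in\mathcal{H}$ for every $j=1,\dots,n-1$. The inverses $S^{-1}=S^{n-1}$ and $G^{-1}=G$ are themselves words in the generators, so no separate check is needed for them.

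\emph{Conjugation by $G$.} Using $G^{-1}=G$, we write
\[
G[S^j,G]G^{-1} = G S^j G S^{-j} = [G,S^j].
\]
By Lemma~\ref{lem:inv}, $[G,S^j]=[S^j,G]$. (Equivalently, $[G,S^j]=[S^j,G]^{-1}$, and $[S^j,G]$ is involutory.) Hence conjugation by $G$ fixes every generator of $\mathcal{H}$.

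\emph{Conjugation by $S$.} The key identity is
\[
S[S^j,G]S^{-1} = S^{j+1}GS^{-j-1}G\cdot G S G S^{-1}\cdot\ldots
\]
and it is cleaner to organise the computation as follows. We have
\[
S[S^j,G]S^{-1} = S^{j+1}GS^{-(j+1)}\,\bigl(S G S^{-1}\bigr) .
\]
Inserting $G\cdot G=I_n$ gives
\[
S^{j+1}GS^{-(j+1)}G\cdot G\,SGS^{-1} = [S^{j+1},G]\,\bigl(G S G S^{-1}\bigr) = [S^{j+1},G]\,[G,S] .
\]
By Lemma~\ref{lem:inv}, $[G,S]=[S,G]$, so
\[
S[S^j,G]S^{-1} = [S^{j+1},G][S,G]\in\mathcal{H}.
\]
When $j+1=n$, we have $[S^n,G]=[I_n,G]=I_n$, and the result is $[S,G]\in\mathcal{H}$.

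As an alternative check, one can argue in the Fourier basis via Lemma~\ref{lem:comm}. There $A^{-1}SA$ is a cyclic permutation matrix, and $A^{-1}\mathcal{H}A$ consists of the diagonal $\pm1$ matrices having an even number of $-1$'s that are generated by the $\mathrm{diag}$ with $-1$ at positions $1$ and $j+1$. This set is exactly the group of all diagonal sign matrices with an even number of $-1$'s (order $2^{n-1}$, matching the previous Proposition). That group is visibly invariant under conjugation by permutation matrices and by the diagonal matrix $A^{-1}GA$.

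The main obstacle is only bookkeeping: getting the order of factors right in the $S$-conjugation identity and handling the index wrap-around $j+1\equiv 0 \bmod n$. The Fourier-basis description gives an independent confirmation of both steps.
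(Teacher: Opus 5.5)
Your proposal is correct and takes essentially the same route as the paper: you check that conjugation by the generators $G$ and $S$ sends each $[S^j,G]$ into $\mathcal{H}$, via $G[S^j,G]G=[S^j,G]$ and $S[S^j,G]S^{-1}=[S^{j+1},G][S,G]$. The paper writes the latter as $[S,G][S^{j+1},G]$, which is the same element because the commutators commute by Lemma~\ref{lem:comm}; your explicit treatment of the wrap-around case $j+1=n$ and the Fourier-basis cross-check are useful extras, but you should delete the half-finished first display ending in $\ldots$.
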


\begin{pr}
By Lemma~\ref{lem:inv}, we obtain for $j = 1, 2, \dots, n-1$,  
\begin{align*}
S [S^j,G]S^{-1} 
&= S [G, S^j] S^{-1}\\
&= S (G S^{-1}G)(GSG) [G, S^j] S^{-1}\\
&=[S,G]GSG (G S^j G S^{-j}) S^{-1} \\
&=[S,G]GS^{j+1}G S^{-(j+1)}\\
&=[S,G][G, S^{j+1}] \\
&=[S,G][S^{j+1},G] \in  \mathcal{H},
\end{align*}
and 
\begin{align*}
G [S^j,G] G 
&= G S^j G S^{-j} G G \\
&= [G,  S^j] \\
&=[S^j,G] \in \mathcal{H}.
\end{align*}
Therefore, $\mathcal{H}$ is a normal subgroup of $\mathcal{K}$. 
\hspace{\fill}$\Box$
\end{pr}

Since $\mathcal{H}$ is a normal subgroup of $\mathcal{K}$, 
we can consider the quotient group $\mathcal{K}/\mathcal{H}$. 
The following Lemmas \ref{lem:sets} and \ref{lem:H} are used 
to clarify the structure of $\mathcal{K}/\mathcal{H}$. 

\begin{lem} \label{lem:sets}
The following assertion holds. 

\begin{align} \label{eq:sets}
S^j G^k \mathcal{H} = G^k S^j \mathcal{H} \ \text{for} \ 
j, k \in \mathbb{N} \cup \{0\} .
\end{align}
\end{lem}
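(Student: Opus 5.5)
We want $S^jG^k\mathcal{H}=G^kS^j\mathcal{H}$, i.e.\ $(G^kS^j)^{-1}S^jG^k\in\mathcal{H}$. Because $\mathcal{H}$ is normal in $\mathcal{K}$ (Proposition~\ref{prop:normal}), it is enough to show that $S^jG^kS^{-j}G^{-k}\in\mathcal{H}$. Indeed, if $c=S^jG^kS^{-j}G^{-k}\in\mathcal{H}$, then $S^jG^k=cG^kS^j$, so
\[
S^jG^k\mathcal{H}=cG^kS^j\mathcal{H}=G^kS^j\big((G^kS^j)^{-1}cG^kS^j\big)\mathcal{H}=G^kS^j\mathcal{H},
\]
where the last step uses normality. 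The task therefore reduces to showing that every commutator $[S^j,G^k]=S^jG^kS^{-j}G^{-k}$ lies in $\mathcal{H}$ for all $j,k\in\mathbb{N}\cup\{0\}$.

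We split into cases using $G^2=I_n$ and $S^n=I_n$.

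\emph{Case $k$ even.} Here $G^k=I_n$, so $[S^j,G^k]=I_n\in\mathcal{H}$.

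\emph{Case $k$ odd.} Here $G^k=G$ and $G^{-k}=G$, so $[S^j,G^k]=S^jGS^{-j}G=[S^j,G]$.
\begin{itemize}
\item If $j\equiv 0 \pmod n$, then $S^j=I_n$ and this commutator is $I_n$.
\item Otherwise, $S^j=S^{j'}$ with $j'\in\{1,\dots,n-1\}$, so $[S^j,G]=[S^{j'},G]$, which is one of the generators of $\mathcal{H}$.
\end{itemize}
In every case $[S^j,G^k]\in\mathcal{H}$.

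Alternatively, one can avoid the conjugation step by working directly with Lemma~\ref{lem:inv}. Write $S^jG^k=[S^j,G^k]\,G^kS^j$. The commutator $[S^j,G^k]$ lies in $\mathcal{H}$ by the case analysis above. It is also involutory, and it equals $[G^k,S^j]$ by Lemma~\ref{lem:inv}, which makes it convenient to move to the right via normality.

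There is no real obstacle here. The only point needing care is that left and right cosets agree, which is exactly what normality (Proposition~\ref{prop:normal}) provides. The remaining work is the reduction of exponents modulo $2$ and modulo $n$.
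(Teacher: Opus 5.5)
Your proof is correct. It differs from the paper's only in how the commutator is moved past $G^kS^j$.

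\textbf{Your route.} You write $S^jG^k=[S^j,G^k]\,G^kS^j$, which puts the commutator on the left. You then need Proposition~\ref{prop:normal} to conjugate it into the left coset $G^kS^j\mathcal{H}$. This is legitimate, because that proposition is proved before Lemma~\ref{lem:sets} and relies only on Lemma~\ref{lem:inv}.

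\textbf{The paper's route.} It uses the identity $[S^j,G^k]=[G^k,S^j]$ from Lemma~\ref{lem:inv} to obtain $S^jG^k[S^{-j},G^k]=G^kS^j$. This places an element of $\mathcal{H}$ directly on the right, so $G^kS^j\in S^jG^k\mathcal{H}$ without any appeal to normality. The two cosets then coincide because they intersect.

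So the paper spends the involutory property of the commutators, while you spend normality. Your case analysis (reducing $k$ mod $2$ and $j$ mod $n$) is more explicit than the paper, which simply asserts $[S^{-j},G^k]\in\mathcal{H}$.

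\textbf{Neither tool is needed.} The element in your very first line is already a commutator:
\[
(G^kS^j)^{-1}S^jG^k=S^{-j}G^{-k}S^{j}G^{k}=[S^{-j},G^{-k}].
\]
By exactly your case analysis, this is either $I_n$ or a generator $[S^{j'},G]$ with $j'\equiv -j \pmod n$. The lemma therefore follows immediately, without Proposition~\ref{prop:normal} or Lemma~\ref{lem:inv}.

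Your ``alternatively'' paragraph does not actually give a second route. It says it avoids the conjugation step, but it then appeals to normality again.
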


\begin{pr}
Let $j,k \in \mathbb{N} \cup \{0\}$.
Then it follows from Lemma~\ref{lem:inv} and $G^2=I_n$ that  
\begin{align*} 
S^j G^k [S^{-j}, G^k] 
&= S^j G^k S^{-j}G^k S^j G^{-k} \\ 
&= [S^j, G^k] S^j G^{-k} \\
&= [G^k, S^j] S^j G^{-k} \\
&= G^k S^j G^{-k} S^{-j}S^j G^{-k} \\ 
&= G^k S^j (G^2)^{-k} \\
&= G^k S^j . 
\end{align*}
Noting that $[S^{-j}, G^k] \in \mathcal{H}$, we have $S^j G^k [S^{-j}, G^k] \in S^j G^k \mathcal{H} \cap G^k S^j \mathcal{H} \ne \emptyset$, which implies that \eqref{eq:sets} holds. 
\hspace{\fill}$\Box$
\end{pr}

\begin{lem} \label{lem:H}
The following assertions hold:  
\begin{enumerate}
\item $S^j \notin \mathcal{H}$ for $j=1,2,\dots,n-1$. 
\item $S^j G \notin \mathcal{H}$ for $j=1,2,\dots,n-1$. 
\item $G \in \mathcal{H}$ if $n$ is odd, and 
$G \notin \mathcal{H}$ if $n$ is even. 
\end{enumerate}
\end{lem}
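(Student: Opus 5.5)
The plan is to conjugate everything by the Fourier matrix $A$ and work in the diagonalized picture. By Lemma~\ref{lem:comm}, each $A^{-1}[S^j,G]A$ is the diagonal matrix $D_j$ whose $(1,1)$ and $(j+1,j+1)$ entries are $-1$ and whose other entries are $1$. By \eqref{eq:strH-2}, every element of $A^{-1}\mathcal{H}A$ is then a product $D_1^{m_1}\cdots D_{n-1}^{m_{n-1}}$ with $m_i\in\{0,1\}$. Two consequences follow: every element of $A^{-1}\mathcal{H}A$ is diagonal, and every such element has determinant $1$, since each $D_j$ has exactly two entries equal to $-1$.

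For (1), I will use the computation of $A^{-1}S^jA$ in the proof of Lemma~\ref{lem:comm}. It is a permutation matrix corresponding to a cyclic shift by $j$ positions. For $1\le j\le n-1$ this shift is not the identity permutation, so the matrix is not diagonal. Hence $A^{-1}S^jA\notin A^{-1}\mathcal{H}A$, which gives $S^j\notin\mathcal{H}$.

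For (2), $A^{-1}S^jGA = (A^{-1}S^jA)(A^{-1}GA)$. This is the same nontrivial permutation matrix multiplied by $\mathrm{diag}[1,-1,\dots,-1]$, i.e.\ a monomial matrix with nonzero entries off the diagonal. It is therefore not diagonal, so $S^jG\notin\mathcal{H}$.

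For (3), $A^{-1}GA=\mathrm{diag}[1,-1,\dots,-1]$ has determinant $(-1)^{n-1}$.
\begin{itemize}
\item If $n$ is even, this determinant is $-1$. Since every element of $A^{-1}\mathcal{H}A$ has determinant $1$, we get $G\notin\mathcal{H}$.
\item If $n$ is odd, I exhibit $G$ explicitly by taking all $m_i=1$. The product $D_1D_2\cdots D_{n-1}$ has entry $-1$ at each position $j+1$ for $j=1,\dots,n-1$, and entry $(-1)^{n-1}=1$ at position $(1,1)$. So it equals $A^{-1}GA$, and hence $G=[S,G][S^2,G]\cdots[S^{n-1},G]\in\mathcal{H}$.
\end{itemize}

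The main point needing care is the description of $A^{-1}\mathcal{H}A$ as a set of diagonal sign matrices of determinant $1$. This follows directly from Lemma~\ref{lem:comm} together with \eqref{eq:strH-2}, since $\mathcal{H}$ is abelian and generated by the $[S^j,G]$. The determinant invariant is what separates the even and odd cases in (3). The only other thing to confirm is that the permutation $A^{-1}S^jA$ from Lemma~\ref{lem:comm} is genuinely off-diagonal for $j\not\equiv 0 \pmod n$, which is immediate from its displayed form.
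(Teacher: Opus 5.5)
Your proof is correct. For parts (1) and (2) it takes a different route from the paper; for part (3) it is essentially the paper's argument.

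\textbf{Parts (1) and (2).} The paper works in the original basis. Since $G[S^k,G]G=[S^k,G]$, the matrix $G$ commutes with every element of $\mathcal{H}$. So $S^j\in\mathcal{H}$ or $S^jG\in\mathcal{H}$ would force $GS^j=S^jG$, contradicting \eqref{eq:GS_inv}. You instead use the stronger containment that $A^{-1}\mathcal{H}A$ consists of diagonal matrices. You then observe that $A^{-1}S^jA$ is the nontrivial cyclic shift (nonzero entries where $k\equiv l+j \pmod n$), and $A^{-1}S^jGA$ is that shift with column signs. Both are non-diagonal monomial matrices, so neither lies in $\mathcal{H}$. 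The paper's version needs only the weaker fact that $\mathcal{H}$ centralizes $G$, but it leans on \eqref{eq:GS_inv}, which the paper states without proof. Yours bypasses \eqref{eq:GS_inv} and uses only the explicit matrices from Lemma~\ref{lem:comm}.

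\textbf{Part (3).} Your determinant invariant is the same parity count the paper performs on the entry $w_{11}$, just stated more cleanly. Your odd case is identical to the paper's.

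Your viewpoint can be pushed slightly further to give a complete membership test. The $D_j$ correspond to the vectors $e_1+e_{j+1}\in\mathbb{F}_2^n$, which span the even-weight subspace. Hence $A^{-1}\mathcal{H}A$ is exactly the group of diagonal $\pm1$ matrices of determinant $1$. With that, all three parts, including the odd case of (3), reduce to one-line checks, with no explicit product needed.
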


\begin{pr}
Let $j \in \{ 1, 2, \dots, n-1 \}$. 

(i) Suppose that $S^j \in \mathcal{H}$. 
Then there exist $m_1, m_2, \dots, m_r \in \{1,2,\dots,n-1\}$ such that 
$$
S^j = [S^{m_1},G][S^{m_2},G]\cdots[S^{m_r},G]. 
$$
Noticing that $G[S^k,G]G = [S^k,G]$ for $k = 1,2,\dots,n-1$ as shown 
in the proof of Proposition \ref{prop:normal}, we have  
\begin{align*}
GS^j &=G[S^{m_1},G][S^{m_2},G]\cdots[S^{m_r},G]\\
&=G[S^{m_1},G]GG[S^{m_2},G]GG \cdots GG[S^{m_r},G]GG \\
&= [S^{m_1},G][S^{m_2},G]\cdots[S^{m_r},G] G \\
&= S^j G.
\end{align*}
This contradicts \eqref{eq:GS_inv}. 

(ii) 
Suppose that $S^j G \in \mathcal{H}$. 
Then there exist $m_1, m_2 \dots, m_r \in \{1,2,\dots,n-1\}$ such that 
$$
S^j G= [S^{m_1},G][S^{m_2},G]\cdots[S^{m_r},G]. 
$$
Thus, we have  
\begin{align*}
G(S^j G)&=G[S^{m_1},G][S^{m_2},G]\cdots[S^{m_r},G]\\
&=G[S^{m_1},G]GG[S^{m_2},G]GG \cdots GG[S^{m_r},G]GG \\
&= [S^{m_1},G][S^{m_2},G]\cdots[S^{m_r},G] G \\
&= (S^j G) G \\
&= S^j, 
\end{align*}
which implies that $G S^j = S^j G$. 
This contradicts \eqref{eq:GS_inv}.  

(iii) 
Assume that $n$ is odd. Then it follows from Lemma~\ref{lem:comm} that   
\begin{align*}
&A^{-1}[S,G][S^2,G] \cdots [S^{n-1},G]A \\
&= A^{-1}[S,G]AA^{-1}[S^2,G] \cdots AA^{-1}[S^{n-1},G]A \\
&= \begin{bmatrix}
1          &   0         &   \cdots    &0 \\
 0         &  -1          &   \cdots  & 0 \\
 \vdots &    \vdots & \ddots &  \vdots \\
  0       &  0 &   \cdots    &-1  \\
\end{bmatrix} \\
&=A^{-1}GA.  
\end{align*}
Thus, we have $G=[S,G][S^2,G] \cdots [S^{n-1},G]$, which implies that 
$G \in \mathcal{H}$. 

Next, assume that $n$ is even. Then define 
\[
W = A^{-1} [S^{m_1},G][S^{m_2},G]\cdots[S^{m_r},G] A,
\]
where $m_1, m_2, \dots, m_r \in \{1,2,\dots,n-1\}$, and 
$w_{jj}$ denotes the $(j, j)$-th element of $W$ for $j=1,2,\dots, n$. 
If $w_{jj} = -1$ for $j=2, 3, \dots, n$, $r$ is odd. 
Then, $w_{11} = -1$. 
Hence, by Lemma \ref{lem:comm} there do not exist $m_1, m_2, \dots, m_r \in \{1,2,\dots,n-1\}$ satisfying $G=[S^{m_1},G][S^{m_2},G]\cdots[S^{m_r},G]$, 
which implies $G \notin \mathcal{H}$. 
\hspace{\fill}$\Box$
\end{pr}

Using Lemmas~\ref{lem:sets} and \ref{lem:H}, we can prove the following proposition. 

\begin{prop} \label{prop:comp}
The following assertions hold: 
\begin{enumerate}
\item If $n$ is even, $\{ I_n, S, S^2, \dots, S^{n-1}, G, SG, S^2G, 
\ \dots, 
S^{n-1}G \}$ is a complete system of representatives for $\mathcal{K}/\mathcal{H}$. 
\item If $n$ is odd,  $\{ I_n, S, S^2, \dots, S^{n-1} \}$ is a complete system of representatives for $\mathcal{K}/\mathcal{H}$. 
\end{enumerate}
\end{prop}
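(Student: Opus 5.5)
The plan has two parts: show that the listed cosets exhaust $\mathcal{K}/\mathcal{H}$ (covering), and show that they are pairwise distinct (distinctness).

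\textbf{Covering.} Every element of $\mathcal{K}$ is a finite word in $S$ and $G$; inverses are not needed, since $S^{-1}=S^{n-1}$ and $G^{-1}=G$. Because $\mathcal{H}$ is normal (Proposition~\ref{prop:normal}), $\mathcal{K}/\mathcal{H}$ is a group and coset multiplication is well defined. Lemma~\ref{lem:sets} states $S^jG^k\mathcal{H}=G^kS^j\mathcal{H}$, so the cosets $S\mathcal{H}$ and $G\mathcal{H}$ commute in $\mathcal{K}/\mathcal{H}$. Hence every coset has the form $S^{a}G^{b}\mathcal{H}$. Using $S^n=I_n$ and $G^2=I_n$ we may take $a\in\{0,\dots,n-1\}$ and $b\in\{0,1\}$. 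When $n$ is odd, Lemma~\ref{lem:H}(iii) gives $G\in\mathcal{H}$, so $S^aG\mathcal{H}=S^a\mathcal{H}$, and the powers $I_n,S,\dots,S^{n-1}$ already suffice.

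\textbf{Distinctness.} Two cosets coincide exactly when the quotient of their representatives lies in $\mathcal{H}$. I will check the possible pairs in turn.
\begin{itemize}
\item $S^a\mathcal{H}=S^{a'}\mathcal{H}$ with $a\ne a'$ forces $S^{a-a'}\in\mathcal{H}$. The exponent can be reduced modulo $n$ to lie in $\{1,\dots,n-1\}$, which contradicts Lemma~\ref{lem:H}(i). This settles the odd case.
\item For even $n$, suppose $S^a\mathcal{H}=S^{a'}G\mathcal{H}$. Then $S^{a'-a}G\in\mathcal{H}$, because $S^{-a}S^{a'}G=S^{a'-a}G$. If $a'\not\equiv a$, this contradicts Lemma~\ref{lem:H}(ii). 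If $a'\equiv a$, it says $G\in\mathcal{H}$, contradicting Lemma~\ref{lem:H}(iii) for even $n$.
\item Suppose $S^aG\mathcal{H}=S^{a'}G\mathcal{H}$. Then $(S^aG)^{-1}S^{a'}G=GS^{a'-a}G\in\mathcal{H}$. Since $\mathcal{H}$ is normal, conjugating by $G$ gives $S^{a'-a}\in\mathcal{H}$, and Lemma~\ref{lem:H}(i) again gives a contradiction.
\end{itemize}

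The only delicate point is tracking the order of factors when forming quotients of representatives, so that each case lands precisely on one of the hypotheses of Lemma~\ref{lem:H}. Normality of $\mathcal{H}$ and Lemma~\ref{lem:sets} are what handle this.
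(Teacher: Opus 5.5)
Your proposal is correct and follows essentially the same route as the paper. Covering comes from Lemma~\ref{lem:sets} together with $S^n=G^2=I_n$ (and $G\in\mathcal{H}$ for odd $n$), and distinctness reduces each coincidence of cosets to a contradiction with (i), (ii) or (iii) of Lemma~\ref{lem:H}. The only cosmetic difference is in the case $S^aG\mathcal{H}=S^{a'}G\mathcal{H}$: you conjugate $GS^{a'-a}G$ by $G$ using normality, whereas the paper moves $G$ past $S^{k-j}$ via Lemma~\ref{lem:sets}; both yield $S^{a'-a}\in\mathcal{H}$.
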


\begin{pr}
Let $g \in \mathcal{K}$ be expressed as $g=g_1^{m_1} g_2^{m_2} \cdots g_r^{m_r}$,  
where $g_j \in \{S, G\}$ and $m_j \in \mathbb{Z}$ for $j=1,2,\dots, r$. 
Since $S^n=I_n$ and $G^2=I_n$, Lemma~\ref{lem:sets} shows that
there exist $k \in \{ 0,1,2,\dots, n-1 \}$ and $l \in \{0, 1\}$ 
such that  
\begin{align*}
g\mathcal{H} 
&= g_1^{m_1} g_2^{m_2} \cdots g_r^{m_r} \mathcal{H} \\
&= S^k G^l \mathcal{H} . 
\end{align*}
Hence, we have 
\begin{align} \label{eq:KH} 
\mathcal{K}/\mathcal{H} = 
\bigcup_{j=0}^{n-1} S^j \mathcal{H} 
\cup \bigcup_{j=0}^{n-1} S^j G \mathcal{H} .
\end{align}

To prove that $\mathcal{H}, S\mathcal{H}, S^2\mathcal{H},\dots, 
S^{n-1}\mathcal{H}$ are mutually disjoint, 
suppose that there exist $j, k \in \{ 0, 1, 2, \dots, n-1 \}$ with $j \ne k$  such that 
\[
S^j \mathcal{H} \cap S^k \mathcal{H} \not = \emptyset.
\] 
Then there exist $h_1, h_2 \in \mathcal{H}$ such that 
$S^j h_1 = S^k h_2$, 
and hence $S^{j-k} =  h_2 h_1^{-1} \in \mathcal{H}$.   
This contradicts (i) in Lemma~\ref{lem:H}. 
Thus, $\mathcal{H}, S\mathcal{H}, S^2\mathcal{H},\dots,
S^{n-1}\mathcal{H}$ are mutually disjoint.

Assume that $n$ is odd. Then \eqref{eq:KH} and (iii) in 
Lemma~\ref{lem:H} follow 
\[
\mathcal{K}/\mathcal{H} = 
\bigcup_{j=0}^{n-1} S^j \mathcal{H} .
\] 
Therefore, we obtain (ii). 

Assume that $n$ is even. 
Suppose that there exist $j, k \in \{ 0, 1, 2, \dots, n-1 \}$ with $j \ne k \ (j<k)$ such that $S^j G \mathcal{H} = S^k G \mathcal{H}$. 
Then Lemma~\ref{lem:sets} shows that 
\[
\mathcal{H} = GS^{k-j} G \mathcal{H} = S^{k-j} G^2 \mathcal{H}= S^{k-j} \mathcal{H}, 
\]
which implies $S^{k-j} \in \mathcal{H}$. This contradicts 
(i) in Lemma~\ref{lem:H}. 
Next, suppose that there exist $j, k \in \{ 0, 1, 2, \dots, n-1 \}$ such that 
$S^j \mathcal{H} = S^k G \mathcal{H}$. 
Then Lemma~\ref{lem:sets} shows that
\[
\mathcal{H} = S^{k-j} G \mathcal{H},  
\]
which implies $S^{k-j}G \in \mathcal{H}$. 
If $j \ne k$,  this contradicts (ii) in Lemma~\ref{lem:H}. 
If $j=k$, we have $G \in \mathcal{H}$, which contradicts 
(iii) in Lemma~\ref{lem:H}. 
Therefore, we obtain (i). This completes the proof.
\hspace{\fill}$\Box$
\end{pr}

We are now ready to prove our main results. 

\bigskip
\noindent
\textbf{Proof of Theorem~\ref{thm:structure}.}
Let $g_1$ and $g_2$ be generators of $\mathbb{Z}_n$ and $\mathbb{Z}_2$, respectively. 

Assume that $n$ is even. 
It follows from Proposition \ref{prop:comp} that the quotient group $\mathcal{K}/\mathcal{H}$ has order $2n$ and 
\[
\mathcal{K}/\mathcal{H} = \{ \mathcal{H}, S\mathcal{H}, S^2\mathcal{H},\dots, 
S^{n-1}\mathcal{H}, G\mathcal{H}, SG\mathcal{H}, S^2G\mathcal{H}, \ \dots, S^{n-1}G\mathcal{H} \}. 
\]
Define a map $\varphi$ from $\mathcal{K}/\mathcal{H}$ to $\mathbb{Z}_n \times \mathbb{Z}_2$ by 
\[
\varphi(S^k G^l \mathcal{H}) = (g_1^k, g_2^l) \text{ for } k=0,1,2,\dots, n-1,\ l=0, 1. 
\]
By Lemma \ref{lem:sets}, we see that for $k_1, k_2 \in \{0,1,2,\dots,n-1\}$ and $l_1, l_2 \in \{0, 1\}$, 
\begin{align*}
\varphi((S^{k_1} G^{l_1} \mathcal{H})(S^{k_2} G^{l_2} \mathcal{H}))
&= \varphi(S^{k_1} G^{l_1} S^{k_2} G^{l_2} \mathcal{H}) \\
&= \varphi(S^{k_1+k_2} G^{l_1+l_2} \mathcal{H}) \\
&= (g_1^{k_1+k_2}, g_2^{l_1+l_2}) \\
&=(g_1^{k_1}, g_2^{l_1})(g_1^{k_2}, g_2^{l_2}) \\
&=\varphi(S^{k_1} G^{l_1} \mathcal{H})\varphi(S^{k_2} G^{l_2} \mathcal{H}), 
\end{align*}
which implies that $\varphi$ is a group homomorphism. 
Since $\varphi$ is bijective, we obtain (i). 

Next, assume that $n$ is odd. 
It follows from Proposition \ref{prop:comp} that the quotient group $\mathcal{K}/\mathcal{H}$ has order $n$ and
\[
\mathcal{K}/\mathcal{H} = \{ \mathcal{H}, S\mathcal{H}, S^2\mathcal{H},\dots, 
S^{n-1}\mathcal{H} \} . 
\]
Define a map $\psi$ from $\mathcal{K}/\mathcal{H}$ to $\mathbb{Z}_n$ by 
\[
\psi(S^k \mathcal{H}) = g_1^k \text{ for } k=0,1,2,\dots, n-1. 
\]
Since $\psi$ is a group homomorphism and bijective, we obtain (ii). 
This completes the proof. 
\hspace{\fill}$\Box$

\bigskip
\noindent
\textbf{Proof of Theorem~\ref{thm:period}.} 
Since we have $G^n \in \mathcal{H}$ by (iii) in Lemma \ref{lem:H}, 
it follows from Lemma \ref{lem:sets} and $S^n=I_n$ that for 
$j=0, 1,2,\dots,n-1$,
\begin{align*}
(S^j G)^n \mathcal{H}
= S^{nj} G^{n} \mathcal{H}=\mathcal{H}.
\end{align*}
Thus, we have $(S^j G)^n \in \mathcal{H}$ for $j=0, 1,2,\dots,n-1$. 
By (i) and (ii) in Lemma~\ref{lem:H}, we have 
\[
(SG)^k \mathcal{H} = S^k G^k \mathcal{H} \ne \mathcal{H} \text{ for } k=1, 2, \dots, n-1. 
\]
Therefore, $n$ is the smallest positive integer $k$ such that $(SG)^k \in \mathcal{H}$. 

Assume that $n$ is even. 
By Lemma \ref{lem:inv} we have   
\begin{align*}
&[S,G][S^2,G][S^3,G][S^4,G] \cdots[S^{n-3},G][S^{n-2},G][S^{n-1},G]\\
&=([S,G][G,S^2])([S^3,G][G,S^4]) \cdots([S^{n-3},G][G,S^{n-2}])[S^{n-1},G]\\
&=(SGS^{-1}GGS^2GS^{-2})(S^3GS^{-3}GGS^4GS^{-4})\\
&\quad \quad \cdots (S^{n-3}GS^{-n+3}GGS^{n-2}GS^{-n+2})(S^{n-1}GS^{-n+1}G)\\
&=(SGSGS^{-2})(S^3GSGS^{-4})\cdots (S^{n-3}GSGS^{-n+2})(S^{n-1}GS^{-n+1}G)\\
&=(SGSG)(SGSG) \cdots (SGSG)(SGS^{-n+1}G)\\
&=(SG)^n . 
\end{align*}
If $(SG)^n$ is the identity element of $\mathcal{H}$, we have 
\[
[S,G]=[S^2,G][S^3,G][S^4,G] \cdots[S^{n-3},G][S^{n-2},G][S^{n-1},G] 
\]
because $[S,G]$ has order 2. 
This contradicts \eqref{eq:strH-2}. 
Therefore, we see that $(SG)^n$ is not the identity element of $\mathcal{H}$. 
Since $(S^j G)^n \in \mathcal{H}$ for $j=0, 1,2,\dots,n-1$ and every element in $\mathcal{H}$ except for the identity element has order two, $2n$ is the smallest positive integer $k$ such that 
$G^k$, $(SG)^k$, $(S^2G)^k, \dots, (S^{n-1}G)^k$ are the identity element of $\mathcal{H}$. 

Assume that $n$ is odd. 
Then $G^n$ is not the identity element of $\mathcal{H}$ because $G^n = G$.   
Since $(S^j G)^n \in \mathcal{H}$ for $j=0, 1,2,\dots,n-1$ and every element in $\mathcal{H}$ except for the identity element has order two, $2n$ is the smallest positive integer $k$ such that 
$G^k$, $(SG)^k$, $(S^2G)^k, \dots, (S^{n-1}G)^k$ are the identity element of $\mathcal{H}$. 
This completes the proof. 
\hspace{\fill}$\Box$

\section*{Acknowledgement}
We acknowledge the late Professor Toyoki Matsuyama for his invaluable guidance and support during the early stages of this research. 


\end{document}